%% file: main.tex
\documentclass{article}

\usepackage[preprint]{neurips_2026}

\usepackage[utf8]{inputenc} %
\usepackage[T1]{fontenc}    %
\usepackage{hyperref}       %
\usepackage{url}            %
\usepackage{booktabs}       %
\usepackage{amsfonts}       %
\usepackage{nicefrac}       %
\usepackage{microtype}      %
\usepackage{xcolor}         %
\usepackage{amsthm}
\usepackage{amsmath}
\usepackage{amssymb}
\usepackage{graphicx}
\usepackage{subcaption}

\newcommand{\M}{\mathcal{M}}
\newcommand{\eps}{\varepsilon}
\newcommand{\veps}{\boldsymbol{\varepsilon}}
\renewcommand{\Pr}{\mathbb{P}}
\newcommand{\mse}{\operatorname{MSE}}
\newcommand{\thr}{\mathrm{thr}}

\newtheorem{definition}{Definition}
\newtheorem{theorem}{Theorem}

\newtheorem{lemma}{Lemma}
\newtheorem{claim}{Claim}

\newtheorem{remark}{Remark}
\newtheorem{example}{Example}
\newcommand{\Lap}{\operatorname{Lap}}
\usepackage{dsfont}
\newcommand{\uno}{\mathds{1}}
\newcommand{\mup}{\mu_{\mathcal{P}}}
\title{Limits of Personalizing Differential Privacy Budgets}

\author{%
  Edwige Cyffers \\
  CNRS, LAMSADE, Dauphine-PSL\\
\texttt{edwige.cyffers@cnrs.fr} \\
\And
  Juba Ziani\\
  Georgia Institute of Technology.\\
  \texttt{jziani3@gatech.edu} \\
}

\begin{document}

\maketitle

\begin{abstract}
A key technical difficulty in differential privacy is selecting a privacy budget that satisfies privacy requirements while maximizing utility. A natural and well-studied workaround is to use personalized privacy budgets, which may differ across agents. In this paper, we show that personalized budgets come with major limitations and that for mean estimation, the dominant factor is not full personalization, but rather choosing the right effective privacy budget. This can be achieved through a simple thresholding operator that we describe. Compared with this thresholding baseline, the gains obtained by fully personalized mechanisms are limited. In particular, we precisely quantify the constant-factor improvement in settings with mixed private and public datasets and in private datasets with two levels of privacy requirements. We also establish upper bounds and identify regimes of maximal gain for arbitrary privacy requirements.
\end{abstract}

\section{Introduction}

Differential Privacy (DP) has become a gold standard for privacy protection, particularly for releasing population-level statistics and deploying machine learning models \cite{Dwork2006}. Differential Privacy ensures that the presence or absence of any single data point in the training data cannot affect the final model too much, thereby guaranteeing that an outside observer cannot learn too much about a specific record from the resulting model. This guarantee is controlled by a parameter $\varepsilon$, called the privacy budget: a smaller $\varepsilon$ corresponds to a stronger privacy guarantee. Enforcing DP requires randomization, often achieved by injecting noise into the computation. This noise hides an agent's data: variations in the model or statistic when one's data is or is not included can now be attributed to the noise added to the computation, while complying with all reported privacy levels. However, it also degrades the utility and accuracy of the computation. A recurring question in DP is thus how to select the right privacy budget, in order to achieve the highest possible accuracy while maintaining satisfactory privacy constraints \cite{Ponomareva2023, Cyffers2025SettingI}.

In the classical formulation of Differential Privacy, a single privacy budget is imposed uniformly across all participants, i.e., all data points are guaranteed the same level of privacy $\varepsilon$. A natural extension is therefore to relax this assumption and allow individualized privacy budgets. From an optimization point of view, this enlarges the space of algorithms satisfying the privacy constraints and can therefore remove a source of suboptimality, in the same way that personalized pricing can improve profitability. This personalization of the privacy budget is also grounded in the well-studied fact that different agents have different privacy attitudes \cite{Allen2011-ALLUPW, Gerber2018, contextual}. Exploiting individualized privacy budgets has thus emerged as a rich research field. In particular, this has been observed by \cite{Chaudhuri2023MeanEU, Cummings2022OptimalDA, Fallah2022OptimalAD, Syomantak} for mean estimation. The affine estimator that assigns different levels of privacy to different agents in the central model through reweighting, also known as a sensitivity pre-processing step \cite{IndividualSensitivityPreprocessing}, vastly improves over the naive baseline that sets the noise level according to the most stringent privacy requirement. More precisely, these works i) show that when agents can have one of two possible privacy requirements, their affine estimator is optimal, ii) compute the optimal weights and affine estimator that minimize the mean-squared error of the final estimate for \emph{any} potential privacy profile, including, e.g., the case where every single agent has a different privacy requirement, and iii) characterize the mean-squared error of the resulting estimator. %

However, choosing the estimator that uses all data points with the minimum privacy budget as a baseline may be too naive. Consider a setting with $n$ agents. One agent $a$ has a very stringent privacy budget, e.g., $\eps_a = 0.01$, while all remaining agents have moderate privacy requirements, e.g., $\eps = 1$. In this case, a learner should not add noise calibrated to the smallest privacy budget $\eps_a$ to the entire computation. Instead, they should discard the data of agent $a$ and tailor the rest of the mechanism to the much more favorable privacy level $\eps$, thereby largely reducing the amount of noise added to the computation. This suggests that another natural baseline against which to compare fully personalized mechanisms is a single privacy budget $\eps$ for all retained data points, where $\eps$ is chosen more carefully than the minimum. This leads to a natural thresholding estimator: given a set of privacy preferences from agents $1$ to $n$, denoted by $\eps_1, \ldots, \eps_n$, the learner chooses a threshold $\tau$, discards the data points of all agents $i$ such that $\varepsilon_i < \tau$, and adds noise calibrated to privacy level $\tau$ to the remaining dataset---perhaps the simplest form of privacy techniques that trim part of the dataset, as seen in~\cite{bun2019average}. This is a transparent and easy-to-implement heuristic for dealing with agents with widely differing privacy preferences, and it is not hard to see that this estimator can be orders of magnitude better than the naive minimum-budget estimator.

However, this raises the following question: in the simple problem of mean estimation, how much is actually gained from personalization? The thresholding operator has several advantages over personalized mechanisms: it enables easier auditing of the privacy mechanism, clearer communication of the algorithm's privacy guarantee, and simpler adaptation to changes in the set of participating users. It also provides privacy for free to the agents who were more generous with their data. Yet, from a practical perspective, it is not clear whether the best-known affine estimator yields significant gains in the privacy-accuracy trade-off compared to the thresholding estimator proposed above and in the related literature. In particular, to the best of our knowledge, no prior work has attempted to bound the gap between this simple thresholding heuristic and the optimal affine estimator of the population mean. The question we ask is therefore the following: does personalized affine weighting provide a genuinely meaningful improvement over such thresholding, or are the gains modest in most, or even all, regimes of privacy preferences? Our results show that the gains are quite modest in most practical scenarios. 

\paragraph{Summary of contributions} Our contributions are as follows:
\begin{itemize}
    \item \textbf{Public and private data (Section \ref{sec:public_case}).} We first study the setting where private data can be combined with public data, and show that our simple threshold-based estimator achieves a $2$-approximation to the optimal affine estimator.
    
    \item \textbf{Two privacy levels (Section \ref{sec:two-level}).} We then turn to the case of two finite privacy levels. We first show a gap with the case of a public and a private dataset, and show that one cannot achieve a $2$-factor approximation in the worst case by using our threshold-based estimator. However, we show that the threshold-based estimator continues to provide a constant-factor guarantee, namely a slightly worse $4$-approximation to the optimal affine estimator.
    
    \item \textbf{Arbitrary privacy levels (Section \ref{sec:general_case}).} Finally, we consider the general case with an arbitrary number of privacy levels, where we show that no constant-factor approximation is possible in general: if $m$ privacy levels are allowed among $n$ data points, we show that the approximation ratio between the threshold-based and the optimal affine estimator is tightly given by $\Omega \left( \min \left(\log^2 n, m^2\right)\right)$.
\end{itemize}

\paragraph{Related Work}

While DP typically imposes uniform privacy budget across all individuals, a substantial line of research considers personalized privacy guarantees. %
In particular, the notion of \emph{heterogeneous differential privacy} that we use in this work was originally formalized by \cite{GhoshRoth2011,Jorgensen2015,Alaggan2015}. This definition refines the traditional notion of DP to allow each individual or data point to be protected with a distinct privacy parameter. A core idea that allows personalized privacy guarantees in the central privacy model is ``sensitivity preprocessing''~\cite{IndividualSensitivityPreprocessing}---simply said, allowing queries to have different sensitivities with respect to different data points. The lower the sensitivity of a given individual or data point in a computation, the better their privacy guarantee. For mean estimation, \cite{Chaudhuri2023MeanEU, Syomantak} leverage this idea---in particular, giving different weights to different data points as a function of their privacy requirement---and study how allowing heterogeneous privacy levels can improve accuracy compared to enforcing the most stringent privacy requirement uniformly across all users. Beyond mean estimation, there has been a small line of work aiming to understand heterogenous differential privacy in loss minimization contexts~\cite{boenisch2022individualized,boenisch2023have,persoRidgejuba}. There has also been a significant line of work in mechanism design for data acquisition and data markets~\cite{juba_itcs15,arpita_katrina,GhoshRoth2011,Cummings2022OptimalDA, Fallah2022OptimalAD,juba_2026}; in these settings, agents incur heterogeneous economic costs for privacy, and must be compensated for said costs either through a monetary payment or a service offered in return. There has also been work on personalized privacy accounting, where different data points may incur different realized privacy losses over time, and the goal is to track the effective privacy budget consumed by each data point~\cite{Feldman2020}. Finally, a complementary and important line of work, whose goal is also to improve privacy-accuracy trade-offs in DP, studies how public or non-sensitive data can be leveraged. These works show that auxiliary public data can significantly reduce the cost of privacy in estimation and optimization tasks \cite{publicprivate19, Wang2020, NEURIPS2022_765ec499, NEURIPS2024_243697ac, 10.5555/3692070.3693404, Block2024OracleEfficientDP}. 

Most of these works assume that privacy preferences and data are uncorrelated, in particular due to strong impossibility results~\cite{GhoshRoth2011,nissim2014redrawing}. However, very recently,~\cite{Chaudhuri2025bis} has made the first breakthrough in over a decade in allowing heterogeneous DP while accounting for the correlation between data points and privacy requirements; this setting is outside the scope of the current paper but an important foundation for future work.

Despite this rich literature, existing work primarily focuses on designing or characterizing optimal mechanisms under heterogeneous privacy constraints. In contrast, relatively little is understood about the marginal value of full personalization compared to simpler alternatives. In particular, prior analyses typically benchmark personalized mechanisms against naive baselines that calibrate noise to the most stringent privacy requirement, which can significantly underestimate the performance of simpler approaches. In this work, we revisit this question in the context of mean estimation by comparing the best-known affine estimators under heterogeneous privacy with a simple thresholding approach that enforces a single privacy level on a subset of the data. Our results show that, in most regimes of interest, the gains from full personalization are relatively limited.

\section{Preliminaries for Differentially Private Mean Estimation}

In this section, we introduce and motivate the definitions and tools used in the remainder of the paper.

\begin{definition}[Approximate Differential Privacy]
  A mechanism $\M$ is $(\eps, \delta)$-differentially private if, for all neighboring datasets $D \sim D'$ and all measurable sets $S$, the following inequality holds, where the randomness is over the randomness of $\M$:
\[
\Pr(\M(D)\in S) \leq e^{\eps} \Pr(\M(D')\in S) +\delta.
\]
\end{definition}

In this definition, the dataset $D$ is typically a set of records $D=\{x_i\}_{i=1}^n$, and the neighboring relation corresponds to replacing one record by another. For instance, $D' = \{x'_i\}_{i=1}^n$ with $x_1 \neq x'_1$ and $x_i = x'_i$ for all $i>1$. Pure differential privacy corresponds to the special case where $\delta = 0$. A simple DP mechanism is the Laplace mechanism.

\begin{definition}[Laplace mechanism]
  Let $\Delta = \max_{x \sim x'}|f(x) - f(x')|$. The Laplace mechanism for the function $f$ is
  \[
  \M(x) = f(x) + \eta, \quad \eta \sim \Lap(\Delta/\eps),
  \]
  and this mechanism is $(\eps, 0)$-DP.
\end{definition}

For personalized privacy budgets, a first attempt is to include the privacy budget within each record and consider $D= \{(x_i, \eps_i, \delta_i)\}_{i=1}^n$, allowing arbitrary changes in the triplet as before.

\begin{definition}[Arbitrary Personalized DP]
  A mechanism $\M$ is arbitrary personalized differentially private if, for all neighboring datasets $D \sim D'$, where $(x, \eps, \delta)$ is changed into $(x', \eps', \delta')$, and all measurable sets $S$, the following inequality holds, where the randomness is over the randomness of $\M$:
\[
\Pr(\M(D)\in S) \leq e^{\min(\eps, \eps')} \Pr(\M(D')\in S) +\min(\delta, \delta').
\]
\end{definition}

Unfortunately, this definition is too stringent. Let $\eps^* = \min_{1\leq i\leq n} \eps_i$ and $\delta^* = \min_{1\leq i\leq n} \delta_i$. A well-known folklore result attributed to Kobbi Nissim, Salil Vadhan, and David Xiao\footnote{While the result is well known to the differential privacy community, there is no public source formally stating the result that can be cited here. We provide a proof for completeness.} states that any arbitrary Personalized DP algorithm is $(2\eps^*, (1+e^{\eps^*})\delta^*)$-DP. In other words, the fact that some records are satisfied by larger privacy budgets does not allow the use of a less restrictive mechanism. %

\begin{proof}
  Fix $D \sim D'$ differing at the $i$-th record, and denote by $(x_i, \eps_i, \delta_i)$ and $(x'_i, \eps'_i, \delta'_i)$ their respective $i$-th records. Both datasets are also neighboring from $D''$, where all records are equal to those of $D$ except the $i$-th one, replaced by $(x, \eps^*, \delta^*)$. It follows that the two following inequalities must hold:
\[
\Pr(\M(D)\in S) \leq e^{\eps^*} \Pr(\M(D'')\in S) +\delta^*
\]
and
\[
\Pr(\M(D'')\in S) \leq e^{\eps^*} \Pr(\M(D')\in S) +\delta^*.
\]
Combining both gives the result.
\end{proof}

\begin{remark}
  This impossibility result is not specific to mean estimation, but holds for any differentially private algorithm.
\end{remark}

Thus, it is not possible to exploit personalized privacy budgets if these budgets can be changed arbitrarily: in that case, one is forced to align the mechanism with the most stringent privacy requirement. Taking advantage of personalization already assumes that the learner can access the privacy budget of each participant. This can be achieved by decoupling the privacy budget from the data, as is done in most related work~\cite{GhoshRoth2011,Jorgensen2015,Fallah2022OptimalAD,IndividualSensitivityPreprocessing,Cummings2022OptimalDA,Syomantak,Chaudhuri2023MeanEU}.

\begin{definition}[Heterogeneous Differential Privacy~\cite{GhoshRoth2011,Jorgensen2015}]\label{def:heterogenenous_DP}
  Fix a vector $\veps = (\eps^{(1)}, \dots, \eps^{(n)})$. The mechanism $\M$ is heterogeneous $\veps$-DP if, for all $i \in[n]$,
\[
\Pr(\M(D) \in S) \leq e^{\eps^{(i)}} \Pr(\M(D^{\prime i}) \in S),
\]
for all measurable sets $S$, where $D$ and $D^{\prime i}$ are any two neighboring datasets that differ arbitrarily only in the $i$-th component.
\end{definition}
We adopt this definition in the rest of the paper. 

\section{Model}
In this work, we compare the performance of two estimators, defined formally below: one with heterogeneous privacy budgets and one with homogeneous privacy guarantees.
We consider the task of estimating the mean $\mup$ of a distribution $\mathcal{P}$ with bounded support $[-1/2,1/2]$. The learner has access to $n$ data points drawn i.i.d. from $\mathcal{P}$. In addition, each data point $x_i$ comes with its own privacy requirement. We assume that the privacy budgets $\{\eps^{(i)}\}_i$ take values in a finite set of at most $m \leq n$ distinct elements, denoted by $\eps_1 < \ldots < \eps_m$, and we denote by $n_i$ the number of participants with privacy budget $\eps_i$. To ensure heterogeneous DP, any estimator that uses data point $x_i$ must satisfy $\eps^{(i)}$-differential privacy with respect to $x_i$.

\subsection{Mean estimation problem}

Given an estimator $\hat\mu$ of the population mean $\mup$, we define the distributionally-robust \emph{risk}, or \emph{mean-squared error}, as our main measure of the accuracy, or quality, of an estimator:
\begin{definition}[Distributionally-Robust Mean-Squared Error]
\[
\sup_\mathcal{P} \mathbb{E}\big[(\hat\mu - \mu_\mathcal{P})^2\big].
\]
\end{definition}
Our work quantifies the maximum possible gains on this metric obtained by fully tailoring the privacy mechanism to personalized privacy budgets. We do so by comparing the best affine fully personalized estimator with an estimator that ensures a single privacy budget.

\subsection{Best affine estimator}

 We consider the class of unbiased affine estimators, i.e. estimators based on a noisy weighted average of the data points
\[
\hat\mu_w(D)=\sum_{i=1}^n w_i x_i + Z\quad ,
\]
where $w_i\ge 0$, $\sum_{i=1}^n w_i = 1$, and $Z\sim \Lap(\eta)$. To ensure $\veps$-heterogeneous differential privacy for every user $u$, the noise scale must satisfy
\[
\eta \ge \max_{i\in[n]} \frac{w_i}{\eps^{(i)}}.
\]
Thus, for fixed weights, the optimal noise parameter is
\[
\eta = \max_{i\in[n]} \frac{w_i}{\eps^{(i)}},
\]
and designing an optimal estimator boils down to finding the optimal weights. Such a characterization of the optimal weights is given in \cite{Syomantak}, and we refer the reader to that work for detailed expressions. For our purposes, we only need to characterize the mean-squared error of the optimal affine estimator. The error comes from two sources: the variance of the data points themselves and the variance of the injected noise. The optimal affine estimator corresponds to the weights that minimize these two terms, leading to the following formula:
\begin{equation}
\mse_{\mathrm{aff}}(\eps)
:=
\inf_{\substack{w_i\ge 0\\ \sum_{i=1}^n w_i=1}}
\left\{
\frac14 \sum_{i=1}^n w_i^2
+
2\left(\max_{i\in[n]} \frac{w_i}{\eps^{(i)}}\right)^2
\right\}.
\label{eq:mseafftedious}
\end{equation}

Several previous works \cite{Syomantak,Cummings2022OptimalDA,Fallah2022OptimalAD} obtain a more explicit characterization of this error by noting that the optimal weights are obtained by clipping the privacy levels. More precisely, the optimal affine estimator sets a threshold $\tau > 0$, defines $\bar\eps^{(i)} = \min\{\eps^{(i)},\tau\}$, and ensures privacy with respect to these truncated privacy levels. Since $\bar\eps_i \leq \eps_i$, the privacy requirements of all data points are satisfied. The weights are then set to be proportional to the clipped privacy budgets and normalized, namely 
\[
w_i = \frac{\min\{\eps^{(i)},\tau\}}{\sum_i \min\{\eps^{(i)},\tau\}}\]
and the noise parameter is set as before to satisfy all privacy constraints: $\eta = \frac{1}{\sum_i \min\{\eps^{(i)},\tau\}}.$
Using this set of weights yields an explicit formula for the mean-squared error of the optimal affine estimator.

\begin{claim}\label{clm:threshold_characterization}
For any $\tau>0$, define 
\[
s_{\tau}:=\sum_{i=1}^n \min\{\eps^{(i)},\tau\},
\qquad
q_{\tau}:=\sum_{i=1}^n \min\{\eps^{(i)},\tau\}^2.
\]
Then
\begin{equation}
\mse_{\mathrm{aff}}(\eps)
=
\inf_{\tau>0}
\left(
\frac{q_{\tau}}{4s_{\tau}^2}
+
\frac{2}{s_{\tau}^2}
\right).
\label{eq:mseaff}
\end{equation}
\end{claim}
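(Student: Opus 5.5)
We prove the two inequalities between the infimum in \eqref{eq:mseafftedious} and the infimum over $\tau$ in \eqref{eq:mseaff} separately.

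\emph{Upper bound ($\le$).} Fix $\tau>0$ and take the clipped weights $w_i=\min\{\eps^{(i)},\tau\}/s_\tau$. These are nonnegative and sum to one, so they are feasible for \eqref{eq:mseafftedious}. We evaluate the objective. First, $\sum_i w_i^2=q_\tau/s_\tau^2$. Second, $w_i/\eps^{(i)}=\min\{1,\tau/\eps^{(i)}\}/s_\tau\le 1/s_\tau$, with equality for any $i$ with $\eps^{(i)}\le\tau$. Hence the max equals $1/s_\tau$ whenever $\tau\ge\eps_1$. For $\tau<\eps_1$ all weights are uniform, and the value is dominated by $\tau=\eps_1$, where we can simply use the bound $\max\le 1/s_\tau$. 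In all cases the objective is at most $q_\tau/(4s_\tau^2)+2/s_\tau^2$. Taking the infimum over $\tau$ gives $\mse_{\mathrm{aff}}\le$ RHS.

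\emph{Lower bound ($\ge$).} Fix any feasible $w$ and set $\eta=\max_i w_i/\eps^{(i)}$. Then $w_i\le \eta\eps^{(i)}$ for all $i$. We reparametrize by the scalar $\eta$ and, for fixed $\eta$, minimize $\sum_i w_i^2$ subject to $0\le w_i\le \eta\eps^{(i)}$ and $\sum_i w_i=1$. This convex problem has a standard water-filling solution: by KKT there is a level $\lambda$ with $w_i=\min\{\eta\eps^{(i)},\lambda\}$. Writing $\lambda=\eta\tau$ gives $w_i=\eta\min\{\eps^{(i)},\tau\}$. The constraint $\sum_i w_i=1$ then forces $\eta=1/s_\tau$, and the objective becomes $\eta^2 q_\tau/4+2\eta^2=q_\tau/(4s_\tau^2)+2/s_\tau^2$.

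Some care is needed for feasibility. We need $\eta\sum_i\eps^{(i)}\ge 1$, i.e.\ $\eta\ge 1/s_\infty$. As $\tau$ ranges over $(0,\infty)$, $s_\tau$ covers $(0,s_\infty]$ continuously and monotonically, so every admissible $\eta$ corresponds to some $\tau$. At $\eta=1/s_\infty$ we take $\tau\ge\eps_m$. Thus every feasible $w$ has objective at least $\min_\tau$ of the displayed expression, which gives $\ge$.

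\emph{Main obstacle.} The one step requiring care is justifying the water-filling form. One route is the KKT conditions for minimizing $\sum_i w_i^2$ with box constraints and one equality constraint. A more elementary route is an exchange argument: if $w_i<\eta\eps^{(i)}$ and $w_j>w_i$, moving mass from $j$ to $i$ strictly decreases $\sum_i w_i^2$ while keeping $\eta$ valid. Hence at the optimum, all unsaturated weights are equal to some common value $\lambda$ and every saturated weight is at most $\lambda$, which is exactly $w_i=\min\{\eta\eps^{(i)},\lambda\}$. Existence of a minimizer for fixed $\eta$ follows from compactness. Existence of the overall infimum is not needed, since we argue pointwise for each feasible $w$.
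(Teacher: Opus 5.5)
Your proposal is correct, and it is more complete than the paper's own argument. The paper's proof of this claim consists only of your upper-bound step. It takes the clipped form $w_j=\eta\min\{\eps^{(j)},\tau\}$ as given, relying on the prior works cited just before the claim for the fact that optimal affine weights are obtained by clipping. It then derives $\eta=1/s_\tau$ from normalization, evaluates the objective, and takes the infimum over $\tau$. On its own, that computation only yields $\mse_{\mathrm{aff}}\le\inf_{\tau>0}\bigl(q_\tau/(4s_\tau^2)+2/s_\tau^2\bigr)$; the reverse inequality is effectively imported from the literature.

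Your lower bound proves that structural fact directly. You fix $\eta=\max_i w_i/\eps^{(i)}$, relax to the box $0\le w_i\le\eta\eps^{(i)}$, and solve the inner quadratic program by water-filling (via KKT, or via your mass-exchange argument, which is sound as stated). Setting $\tau=\lambda/\eta$ and using $\eta s_\tau=1$ then closes the argument. The feasibility condition $\eta\ge 1/s_\infty$ and the all-saturated case $\tau\ge\eps_m$ are handled correctly. You also catch a point the paper glosses over: for $\tau<\min_i\eps^{(i)}$, the clipped weights satisfy $\max_i w_i/\eps^{(i)}<1/s_\tau$, so the displayed formula only upper-bounds their true error, which is harmless for the infimum.

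Your route buys a self-contained proof of the equality. The paper's route buys brevity, at the cost of depending on the cited characterization of the optimal weights.
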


The proof is straightforward and can be found in Appendix~\ref{app:claim-1}. When the data points can have at most two privacy requirements, i.e., $m = 2$ with $0 < \eps_1 < \eps_2$, \cite{Syomantak} shows that the affine estimator above is optimal in terms of mean-squared error. An immediate extension is that the affine estimator is also optimal when combining private data at a predetermined privacy level $\eps_1 > 0$ with public data, corresponding to taking $\eps_2 \to \infty$. When $m > 2$ heterogeneous privacy levels are allowed, the optimal \emph{affine} estimator may no longer be optimal among all estimators. However, we still use the optimal affine estimator as our main point of comparison, noting that it is currently the best-known estimator for mean estimation under heterogeneous privacy.

\subsection{Unique-Threshold $\eps$-estimator}

We compare the best affine estimator with a simple and natural estimator, which we call the \emph{unique-threshold $\eps$ operator}. The key idea is to select a single privacy budget that respects the guarantees imposed by $\veps$-heterogeneous differential privacy while minimizing the mean-squared error. Informally, this operator discards all data points with overly stringent privacy requirements, then adds noise calibrated to the remaining dataset. This estimator does not exploit heterogeneity: it provides a single, identical privacy level $\eps$ to all agents whose data is used in the mean estimation. More formally, we define the estimator as follows.

\begin{definition}[Unique-threshold $\eps$ estimator]
  Let $\veps$ be fixed. For a given $\eps$, let $\uno(\eps)$ be the vector whose $i$-th component is $1$ if $\eps^{(i)} \geq \eps$ and $0$ otherwise, and let $n_{\eps} = \sum_{i=1}^{n} \uno(\eps)_i$. The \emph{threshold mechanism} $\M_{\eps}$ is defined by
  \[
  \M_{\eps}(D) = \frac{1}{n_{\eps}} \sum_{i=1}^{n} \uno(\eps)_i x_i + \Lap \left(\frac{\Delta}{n_{\eps} \eps}\right),
  \]
  and is $\veps$-DP. The \emph{unique-threshold $\eps$ estimator} corresponds to a choice of $\eps$ in the threshold mechanism minimizing the mean squared error.
\end{definition}

As for the affine estimator, the mean-squared error is given by
\begin{equation}
\mse_{\thr}(\veps)
:=
\inf_{\eps>0}
\left\{
\frac{1}{4n_\eps}+\frac{2}{\eps^2 n_\eps^2}
\right\}.
\label{eq:msethr}
\end{equation}
The rest of the paper compares these two estimators under different scenarios for the privacy-budget vector $\veps$.

\section{Limits of public-private combination}\label{sec:public_case}

\begin{figure}[bh]
    \centering

    \begin{subfigure}[t]{0.35\linewidth}
        \centering
        \includegraphics[width=\linewidth]{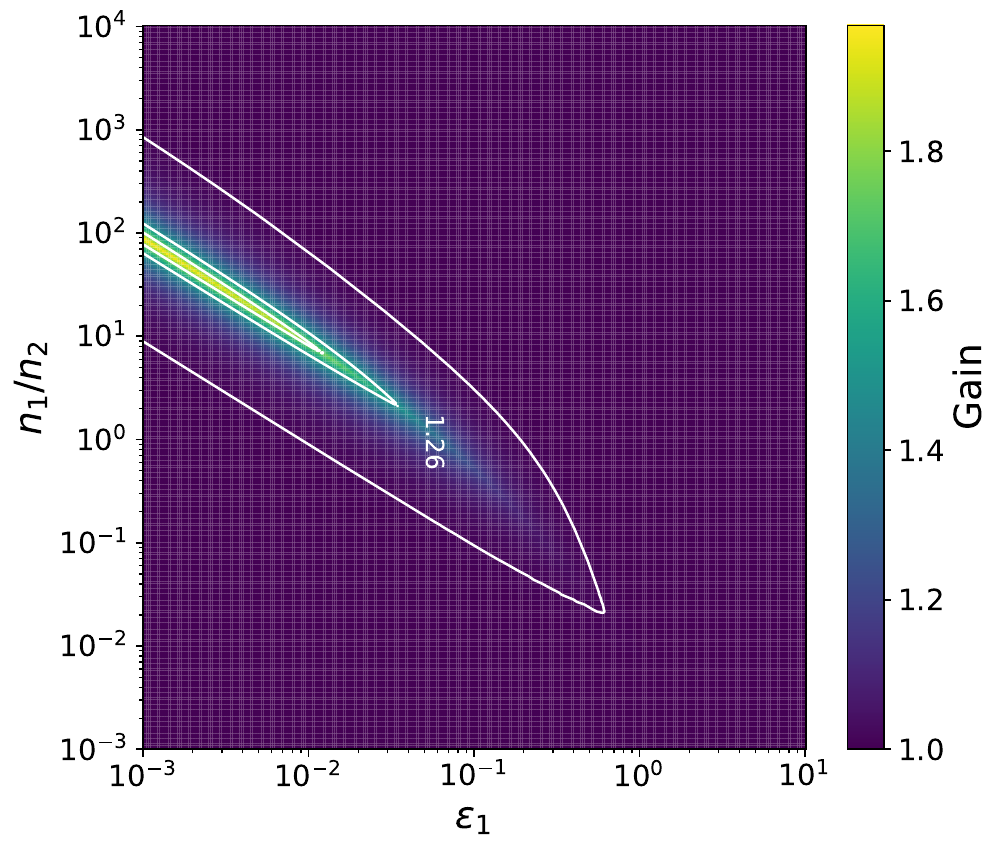}
        \caption{}
        \label{fig:ppheatmap}
    \end{subfigure}
    \hfill
    \begin{subfigure}[t]{0.3\linewidth}
        \centering
        \includegraphics[width=\linewidth]{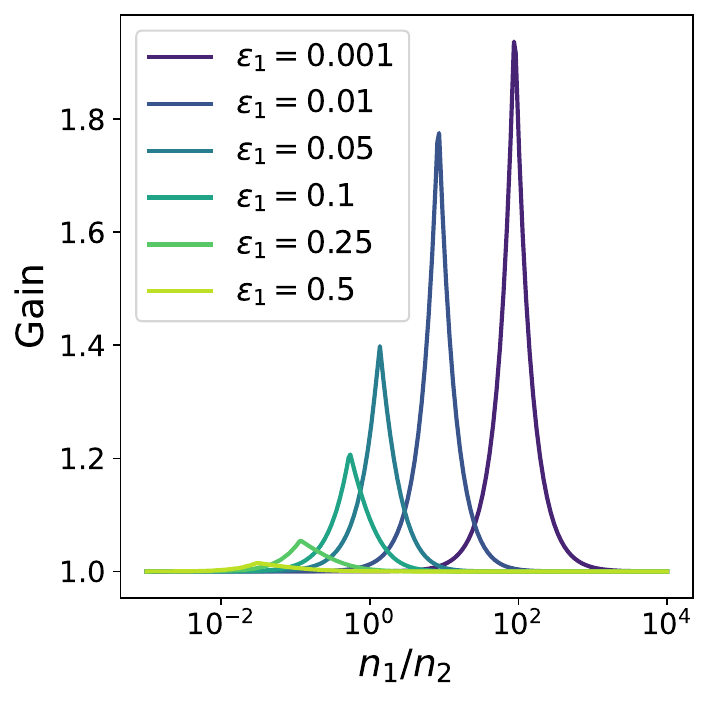}
        \caption{}
        \label{fig:ppn}
    \end{subfigure}
    \hfill
    \begin{subfigure}[t]{0.3\linewidth}
        \centering
        \includegraphics[width=\linewidth]{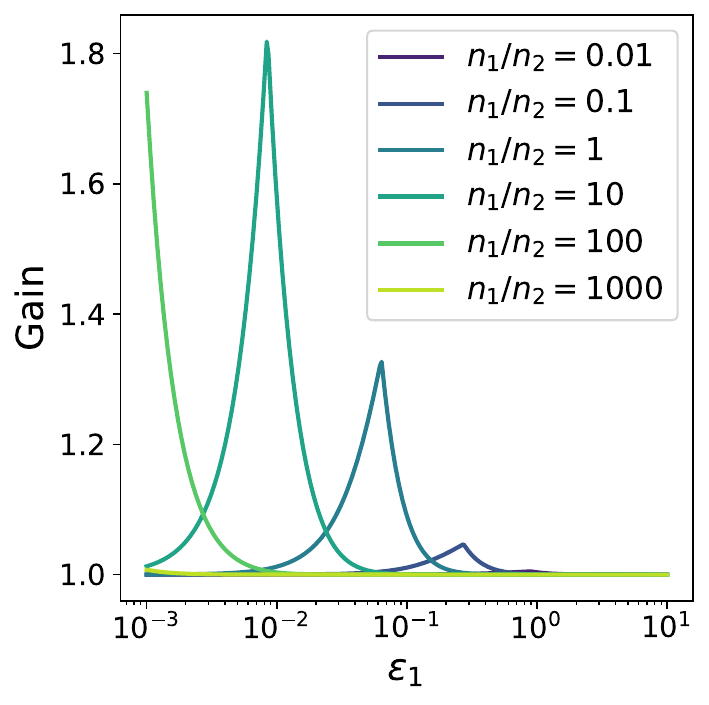}
        \caption{}
        \label{fig:ppe}
    \end{subfigure}

    \caption{Ratio between the unique-threshold estimator and the best affine operator for a combination of public and private data, for a fixed $n_2 = 1000$. The heatmap shows a narrow region where the gain is significant. The outermost contour level is set to $1.01$. Figures \ref{fig:ppe} and \ref{fig:ppn} correspond respectively to rows and columns of the heatmaps, making it possible to visualize the rapid decay of the gain.}
    \label{fig:three-side-by-side}
\end{figure}

We first consider a scenario in which a mixture of public and private data is available. The learner has access to $n_1$ data points with privacy requirement $\varepsilon_1 > 0$. In addition, the learner has access to $n_2$ public data points, for which no noise needs to be added, equivalently, $\varepsilon_2 = +\infty$. In this regime, the public data do not require noise injection to estimate the mean. Thus, for the private and public parts of the estimator to have comparable variance, one typically needs $n_1 \gg n_2$. At this equilibrium, the two parts contribute equally to the error, leading to a constant factor of $2$.

\begin{theorem}\label{thm:public-data-factor-2}
Let $\mse_{\thr}$ denote the risk of the unique-threshold estimator in the public-data regime, and let $\mse_{\mathrm{aff}}$ denote the optimal affine risk. Then
\[
\frac{\mse_{\thr}}{\mse_{\mathrm{aff}}} \le 2.
\]
\end{theorem}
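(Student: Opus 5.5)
The plan is to bound the optimal affine risk from below by a one-dimensional problem. Then we compare it with the better of two concrete threshold choices. We work directly with the variational formula \eqref{eq:mseafftedious} rather than with Claim~\ref{clm:threshold_characterization}. Write $n = n_1+n_2$. In the public-data regime the unique-threshold estimator can take $\eps > \eps_1$, keeping only the $n_2$ public points with no noise. It can also take $\eps = \eps_1$, keeping all $n$ points. Hence
\[
\mse_{\thr} \le \min\{A, B\}, \qquad A := \frac{1}{4n_2}, \qquad B := \frac{1}{4n} + \frac{2}{\eps_1^2 n^2}.
\]

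Next, fix any feasible weight vector $w$ and let $W := \sum_{i \text{ private}} w_i$, so the public weights sum to $1-W$. By Cauchy--Schwarz on each group,
\[
\sum_i w_i^2 \ge \frac{W^2}{n_1} + \frac{(1-W)^2}{n_2}.
\]
The public points impose no constraint on the noise, and the largest private weight is at least the private average. Therefore
\[
\max_i \frac{w_i}{\eps^{(i)}} \ge \frac{W}{n_1 \eps_1}.
\]
Plugging both into \eqref{eq:mseafftedious} gives
\[
\mse_{\mathrm{aff}} \ge \inf_{W \in [0,1]} \bigl\{ a W^2 + b (1-W)^2 \bigr\} = \frac{ab}{a+b},
\]
where
\[
a := \frac{1}{4n_1} + \frac{2}{n_1^2 \eps_1^2}, \qquad b := \frac{1}{4n_2} = A.
\]
The identity $\inf_W \{aW^2 + b(1-W)^2\} = ab/(a+b)$ is the standard minimization of a convex quadratic in $W$.

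It remains to show $\min\{A,B\} \le 2ab/(a+b)$. The key elementary observation is that $B \le a$. Indeed, $n \ge n_1$ makes each term of $B$ at most the corresponding term of $a$: using everyone at level $\eps_1$ is never worse than using only the private points at level $\eps_1$. Hence $\min\{A,B\} \le \min\{a,b\}$. Moreover, $ab/(a+b) \ge \min\{a,b\}\cdot\frac{\max\{a,b\}}{2\max\{a,b\}} = \frac12 \min\{a,b\}$. Combining the two bounds,
\[
\mse_{\thr} \le \min\{a,b\} \le \frac{2ab}{a+b} \le 2\,\mse_{\mathrm{aff}}.
\]

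The only step needing care is the lower bound on the affine risk. One must make sure the noise term depends only on the private weights, since $\eps_2 = +\infty$. One must also check that the Cauchy--Schwarz relaxation, which is attained at uniform weights within each group, loses nothing essential. Everything else is a comparison of explicit rational expressions. The factor $2$ arises exactly when $a = b$, i.e.\ when the private and public parts contribute equally. This matches the heuristic that tightness needs $n_1 \gg n_2$, so that $B \approx a$.
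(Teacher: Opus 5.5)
Your proof is correct and follows the same three-step skeleton as the paper's. First, bound $\mse_{\thr}$ by $\min\{a,b\}$ using the threshold $\eps_1$ and thresholds $\eps\to\infty$; your $B\le a$ is the paper's $\mse_1\le\mse_{\mathrm{priv}}$. Second, identify $ab/(a+b)$ as the affine benchmark. Third, finish with $\min\{a,b\}\le 2ab/(a+b)$.

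The real difference is the middle step. The paper takes $\mse_{\mathrm{aff}}=ab/(a+b)$ from \cite{Syomantak}. Its inline justification combines the private and public estimators with weights $b/(a+b)$ and $a/(a+b)$, which only exhibits a feasible affine estimator. That proves $\mse_{\mathrm{aff}}\le ab/(a+b)$, which is not the direction the ratio bound needs, so the inequality actually used, $\mse_{\mathrm{aff}}\ge ab/(a+b)$, rests entirely on the citation. Your argument derives that lower bound directly from \eqref{eq:mseafftedious}, using Cauchy--Schwarz within each group and $\max_i w_i/\eps^{(i)}\ge W/(n_1\eps_1)$. This makes the proof self-contained. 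Together with the paper's achievability computation, it also recovers the equality that the paper's tightness discussion relies on.

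Your closing worry about whether the relaxation ``loses nothing essential'' is unnecessary for the theorem. Only the lower bound is used; tightness matters only for sharpness.

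One cosmetic point: any finite threshold $\eps>\eps_1$ still adds $\Lap(1/(n_2\eps))$ noise. So $\mse_{\thr}\le 1/(4n_2)$ holds because $\mse_{\thr}$ is an infimum and the noise term vanishes as $\eps\to\infty$, not because some finite threshold is noiseless.
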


The proof can be found in Appendix~\ref{app:public}. This bound is tight and can be approached, when
\[
n_2 \approx \frac{n_1}{1 + 8/(n_1\eps_1^2)}.
\]
For example, the ratio is close to $1.95$  for $\eps_1 = 0.001$, $n_1 = 10^4$, and $n_2 = 12$. The ratio however drops quickly to $1$ when the proportions are not following this rule, as reported in Figure~\ref{fig:three-side-by-side}.

\section{A warm-up: two privacy budgets}\label{sec:two-level}
In this section, we adopt nearly the same setting as in the previous section, but with finite privacy for $\eps_2$. Thus, the learner has access to $n_1$ data points for which they must satisfy $\eps_1$-differential privacy, and to $n_2$ data points with an $\eps_2$-differential privacy requirement, where $0 < \eps_1 < \eps_2 < +\infty$. In this case, the approximation guarantee degrades compared to Section~\ref{sec:public_case}, from a factor of $2$ to a factor of $4$. We first give a simple counterexample violating the factor-$2$ guarantee, and then prove the factor-$4$ bound.

\begin{example}
Consider $\veps = (\frac{1}{2}, 1)$. We can compute the mean-squared errors of the two estimators using Equations~\eqref{eq:mseaff} and \eqref{eq:msethr}. It is easy to verify that the unique-threshold estimator has mean-squared error $17/8$, while the optimal affine estimator has mean-squared error $37/36$. In particular, this implies that
\[
\frac{\mse_{\mathrm{th}}(\veps)}{\mse_{\mathrm{aff}}(\veps)}
\ge
\frac{17/8}{37/36}
>2.
\]
\end{example}

However, we show that with two privacy levels, the approximation ratio between the unique-threshold estimator and the optimal affine estimator remains constant, independently of the value of $\veps$.

\begin{theorem}
\label{thm:two-level-factor-4}
Let $\mse_{\mathrm{th}}(\veps)$ denote the mean-squared error of the unique-threshold estimator, and let $\mse_{\mathrm{aff}}(\veps)$ denote the optimal affine risk. Then
\[
\frac{\mse_{\mathrm{th}}(\veps)}{\mse_{\mathrm{aff}}(\veps)} \le 4.
\]
\end{theorem}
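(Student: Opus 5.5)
Fix the optimal threshold $\tau$ in Claim~\ref{clm:threshold_characterization}. The plan is to lower-bound $\mse_{\mathrm{aff}}$ by a simple expression. Then I would show that one of the two threshold choices, $\eps=\eps_1$ (use all $n=n_1+n_2$ points) or $\eps=\eps_2$ (use only the $n_2$ points), is within a factor $4$ of that bound. With two levels only the regime $\eps_1\le\tau\le\eps_2$ matters, since thresholds below $\eps_1$ or above $\eps_2$ are dominated by the endpoints; the endpoints $\tau=\eps_1$ and $\tau=\eps_2$ give exactly the two threshold estimators. So write $s=n_1\eps_1+n_2\tau$ and $q=n_1\eps_1^2+n_2\tau^2$. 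Then $\mse_{\mathrm{aff}}=\frac{q}{4s^2}+\frac{2}{s^2}$, minimized over $\tau\in[\eps_1,\eps_2]$.

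The key step is to split $s$ into its two contributions, $A=n_1\eps_1$ and $B=n_2\tau$, and use $s\le 2\max(A,B)$.

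\textbf{Case $A\ge B$.} Here $s\le 2n_1\eps_1$, so the noise term satisfies $2/s^2\ge \frac{2}{4n_1^2\eps_1^2}$. For the sampling term I would use Cauchy--Schwarz, $q/s^2\ge 1/n$, which is simply the fact that $\sum w_i^2\ge 1/n$. The threshold estimator at $\eps_1$ has error $\frac{1}{4n}+\frac{2}{\eps_1^2 n^2}$. Its first term equals the affine lower bound $\frac{1}{4n}$. Its second term is at most $\frac{2}{\eps_1^2n_1^2}$, which is $4$ times the affine noise lower bound. This gives ratio $\le 4$.

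\textbf{Case $B> A$.} Here $s< 2n_2\tau\le 2n_2\eps_2$, so $2/s^2\ge \frac{2}{4n_2^2\eps_2^2}$, again a factor $4$ below the noise term of the threshold estimator at $\eps_2$. For the sampling term I need $\frac{q}{4s^2}\ge \frac{1}{4\cdot 4 n_2}$, i.e.\ $q/s^2\ge 1/(4n_2)$. Since $q\ge n_2\tau^2$ and $s<2n_2\tau$, we get $q/s^2> \frac{n_2\tau^2}{4n_2^2\tau^2}=\frac{1}{4n_2}$. Hence both terms of the threshold estimator at $\eps_2$, namely $\frac{1}{4n_2}$ and $\frac{2}{\eps_2^2n_2^2}$, are within a factor $4$ of the corresponding affine terms, and the ratio is again $\le 4$.

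I expect the main obstacle to be the bookkeeping of this case split. In each case both the variance term and the noise term of the chosen threshold estimator must be compared to the \emph{same} $\tau$ of the affine optimum, which works here because each threshold term is bounded separately by $4$ times its affine counterpart. I would also double-check the endpoint reduction: $\tau\le\eps_1$ gives $s=n\tau$ and $q=n\tau^2$, which is no better than $\tau=\eps_1$, and $\tau\ge\eps_2$ gives constant $s$ and $q$. If tightness matters, the constant $4$ is approached when $A\approx B$, consistent with the example above exceeding $2$.
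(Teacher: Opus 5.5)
Your argument is correct, and it takes a different route from the paper.

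\textbf{The paper's route.} The paper does not bound the objective for an arbitrary clipping level. It imports the closed-form two-regime solution of \cite{Syomantak}: the affine optimum sits at $\tau=\eps_2$ when $\eps_2\le R\eps_1$, with $R=1+8/(n_1\eps_1^2)$, and at $\tau=R\eps_1$ otherwise. In the first regime it splits on $n\eps_1$ versus $n_2\eps_2$. It then lower-bounds the numerator $n_1\eps_1^2+n_2\eps_2^2+8$ by a single group's contribution, which gives $\mse_{\mathrm{aff}}\ge\mse_1/4$ or $\mse_{\mathrm{aff}}\ge\mse_2/4$. The second regime is reduced to the boundary $\eps_2=R\eps_1$: there $\mse_2$ can only decrease as $\eps_2$ grows, while $\mse_{\mathrm{aff}}$ stays constant.

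\textbf{Your route.} You bound the objective of Claim~\ref{clm:threshold_characterization} for every $\tau\in[\eps_1,\eps_2]$ simultaneously. This needs neither the external formula nor the saturation step. In your first case, the Cauchy--Schwarz bound $q_\tau/s_\tau^2\ge 1/n$ even makes the sampling term lossless.

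\textbf{What each route buys.} Your split $s_\tau\le 2\max(n_1\eps_1,n_2\tau)$ is essentially the $m=2$ instance of the pigeonhole argument in Lemma~\ref{lem:threshold-vs-affine-sectionX}. Your route therefore makes visible that this theorem is just the case $m=2$ of Theorem~\ref{thm:upper-bound-general-m-levels}. The paper's route, in exchange, tells you where the affine optimum actually lies.

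\textbf{One inaccurate side remark.} You say the endpoints $\tau=\eps_1$ and $\tau=\eps_2$ give exactly the two threshold estimators. This is unused, but it is wrong for $\tau=\eps_2$. That clipping level still puts weight proportional to $\eps_1$ on the first group, rather than discarding it. Only $\tau=\eps_1$ coincides with a threshold estimator.
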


The full proof can be found in Appendix~\ref{app:two-level}. It follows a similar structure to the public-private case, with a slightly more involved characterization of the transition between the regime where $\tau = \eps_1$ is optimal and the regime where $\tau = \eps_2$ is optimal. Note that this factor corresponds to the improvement obtained from having four times more data, or from doubling $\eps$ in regimes where $\eps$ is small enough relative to $n$.

\begin{figure}[bh]
    \centering

    \begin{subfigure}[t]{0.35\linewidth}
        \centering
        \includegraphics[width=\linewidth]{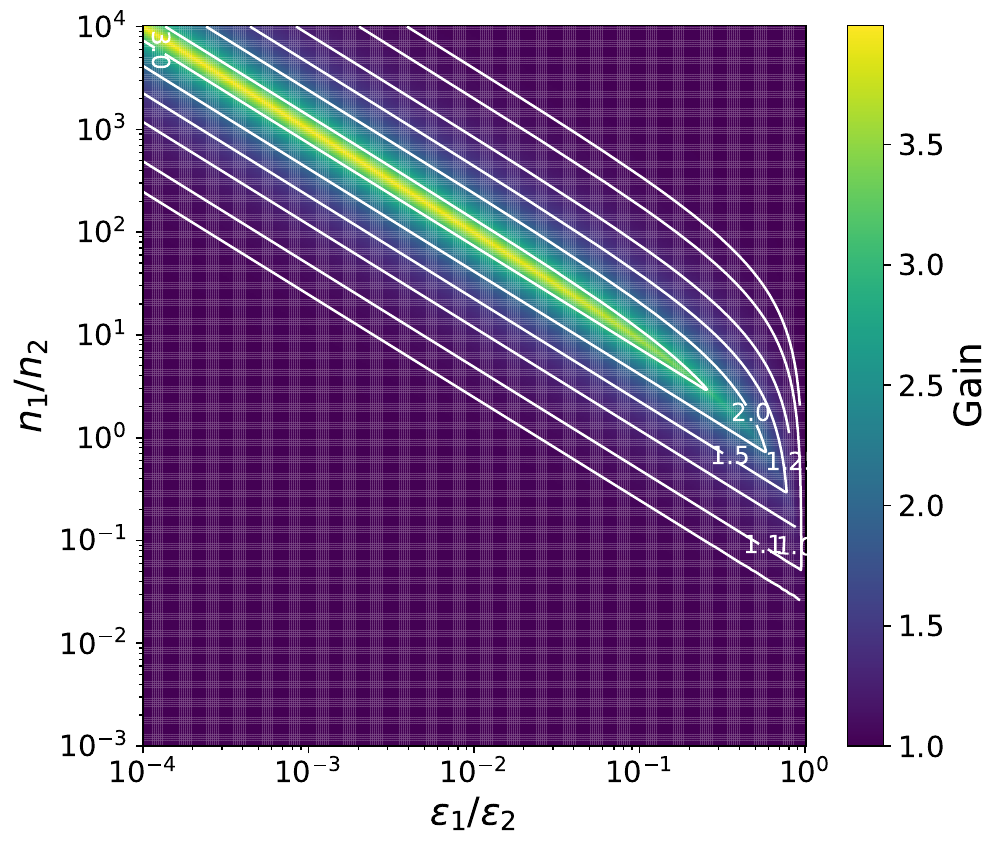}
        \caption{}
        \label{fig:heatmap}
    \end{subfigure}
    \hfill
    \begin{subfigure}[t]{0.3\linewidth}
        \centering
        \includegraphics[width=\linewidth]{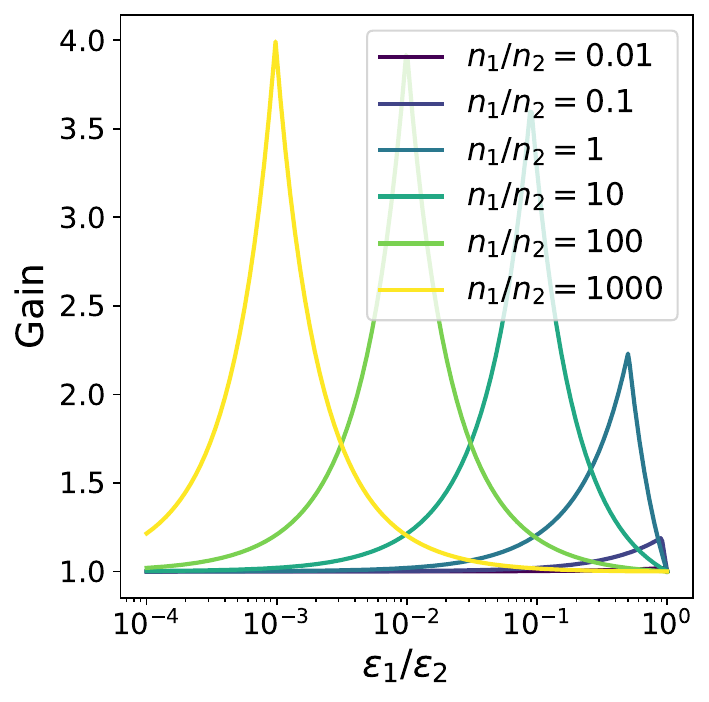}
        \caption{}
        \label{fig:n}
    \end{subfigure}
    \hfill
    \begin{subfigure}[t]{0.3\linewidth}
        \centering
        \includegraphics[width=\linewidth]{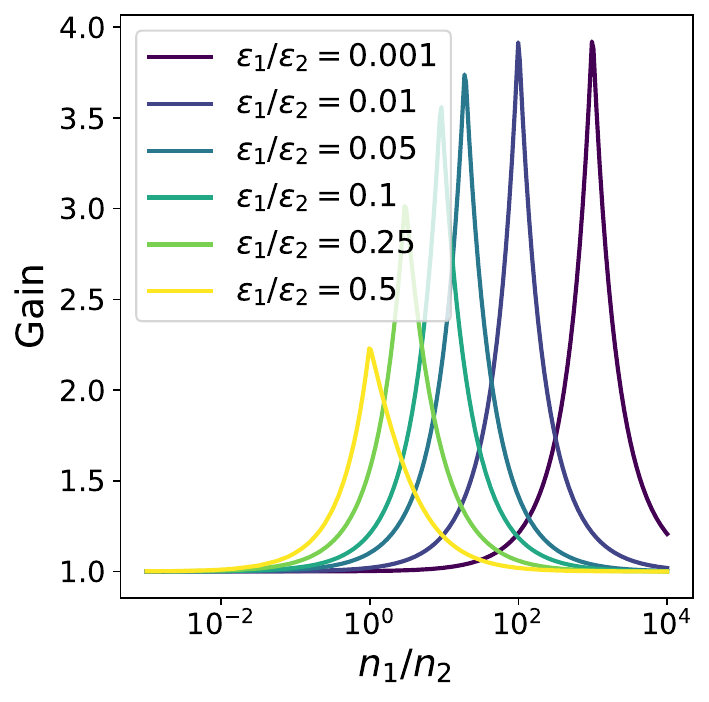}
        \caption{}
        \label{fig:e}
    \end{subfigure}

    \caption{Ratio between the unique-threshold estimator and the best affine operator for two finite privacy levels, similarly to Figure \ref{fig:three-side-by-side}, keeping $n_2 = 1000$ and $\eps_2 = 0.03$ to ensure a large gain ratio.}
    \label{fig:all-private}
\end{figure}

\section{Limits of General choice of privacy budgets}\label{sec:general_case}

Finally, we consider the general case in which there is an arbitrary number of privacy levels $m \geq 2$. For all $i \in [m]$, we denote by $n_i$ the number of data points with privacy requirement $\eps_i$, where $0 < \eps_1 < \ldots < \eps_m$. We denote by $n$ the total number of available data points across all privacy levels.

We show that \textbf{the unique-threshold operator cannot guarantee a constant-factor approximation to the best-known affine estimator.} We first provide a lower bound showing that no constant-factor approximation can be guaranteed when the number of privacy levels is not constant. In particular, we construct a class of privacy parameters for which the ratio grows with $m$, showing that the approximation ratio is not uniformly bounded by a constant independent of the problem parameters.

\begin{theorem}
\label{thm:lower-bound-general-m-levels}
Take any $m \geq 2$. There exist privacy levels $0 < \eps_1 < \cdots < \eps_m$ and data quantities $n_1,\dots,n_m \in \mathbb{N}$ with $n \triangleq \sum_{i=1}^m n_i = 2^m-1$ such that
\[
\frac{\mse_{\thr}(\veps)}{\mse_{\mathrm{aff}}(\veps)}
\;\ge\;
\frac{m^2}{5}.
\]
In particular, as $m = \log_2 (n+1)$, we obtain
\[
\frac{\mse_{\thr}(\veps)}{\mse_{\mathrm{aff}}(\veps)}
\geq \frac{(\log_2 (n+1))^2}{5}.
\]
\end{theorem}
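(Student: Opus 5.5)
The plan is to build a geometric instance in which every threshold choice yields the same noise term. The affine estimator with a large clipping level then aggregates all $m$ groups and gains a factor of order $m$ in the effective sum $s_\tau$. That factor becomes $m^2$ in the noise term of Equation~\eqref{eq:mseaff}. We keep the privacy levels small, so that noise dominates sampling variance in both estimators.

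\textbf{Construction.} Set $n_i = 2^{m-i}$ for $i\in[m]$, so that $n=\sum_i n_i = 2^m-1$. Let $N_i := \sum_{j\ge i} n_j = 2^{m-i+1}-1$ be the number of points whose budget is at least $\eps_i$. Fix a constant $K\in(0,1]$ (say $K=1$) and set $\eps_i := K/N_i$. Since $N_i$ is strictly decreasing in $i$, this gives $0<\eps_1<\cdots<\eps_m$, and by design $\eps_i N_i = K$ for every $i$.

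\textbf{Lower bound on $\mse_{\thr}$.} In Equation~\eqref{eq:msethr}, any threshold $\eps\in(\eps_{i-1},\eps_i]$ keeps exactly $n_\eps=N_i$ points. Hence $\eps\, n_\eps\le \eps_i N_i = K$. Thresholds above $\eps_m$ keep no data and are excluded. Dropping the nonnegative variance term, we get $\mse_{\thr}\ge 2/K^2$.

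\textbf{Upper bound on $\mse_{\mathrm{aff}}$.} We invoke Claim~\ref{clm:threshold_characterization} with $\tau=\eps_m$, so that nothing is clipped. Then
\[
s_\tau=\sum_i n_i\eps_i = K\sum_{i=1}^m \frac{2^{m-i}}{2^{m-i+1}-1}\ge \frac{Km}{2}
\quad\text{and}\quad
q_\tau = K^2\sum_i \frac{n_i}{N_i^2}\le K^2\sum_i \frac{1}{N_i}\le 2K^2 .
\]
The last inequality holds because $N_i\ge 2^{m-i}$, so $\sum_i 1/N_i$ is bounded by a geometric sum. It follows that
\[
\mse_{\mathrm{aff}}\le \frac{q_\tau/4+2}{s_\tau^2}\le \frac{4(K^2/2+2)}{K^2m^2}.
\]

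\textbf{Combining.} Dividing the two bounds gives
\[
\frac{\mse_{\thr}}{\mse_{\mathrm{aff}}}\ge \frac{2m^2}{K^2/2+2\cdot 4/4\cdot 4/4\cdot 1\,(\,\cdot\,)}\;=\;\frac{2m^2}{K^2/2+8/4\cdot 4/4\cdot 1\cdot 1}\!,
\]
which, written out directly, is $\dfrac{2/K^2}{(2K^2+8)/(K^2m^2)}=\dfrac{2m^2}{2K^2+8}\ge \dfrac{m^2}{5}$ for $K\le 1$. The displayed consequence in terms of $n$ then follows by substituting $m=\log_2(n+1)$.

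The step needing the most care is the lower bound on the thresholding error: it must hold for every real $\eps>0$, not just for the levels $\eps_i$. The argument above handles this by grouping thresholds into the intervals $(\eps_{i-1},\eps_i]$, with $\eps_0:=0$. The remaining pieces are routine geometric-sum estimates: $\frac{2^{k}}{2^{k+1}-1}\ge \tfrac12$ for $s_\tau$, and $\sum_i 1/N_i\le 2$ for $q_\tau$.
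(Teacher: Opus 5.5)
Your proof is correct and follows essentially the paper's equal-revenue construction. The paper takes $\eps_i=2^{-(i-1)}$ and $n_i=2^{i-1}$, so each group contributes exactly $n_i\eps_i=1$ and every threshold has $\eps n_\eps<2$; you instead make $\eps n_\eps=K$ exact and let each group contribute at least $K/2$, then use the same unclipped affine estimator and reach the same $m^2/5$ at $K=1$. Do clean up the garbled first display in your combining step; the directly written computation $\frac{2m^2}{2K^2+8}\ge \frac{m^2}{5}$ is what actually carries the argument.
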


\begin{proof}[Proof Sketch]
The full proof is given in Appendix~\ref{app:general-case}. The intuition behind our construction is as follows. The unique-threshold estimator chooses a privacy threshold $\eps$: a smaller $\eps$ requires the learner to satisfy more stringent privacy constraints, but also allows them to use a larger number of data points $n_\eps$. Since the sensitivity of the mean is $1/n_\eps$, as recalled in our preliminaries, the amount of noise required for privacy is controlled by
$\Delta/\eps = 1/n_\eps\eps$.
We construct an instance of the problem in which: i) $n_\eps \eps = O(1)$ regardless of how the learner chooses $\eps$, leading to a constant noise level and hence to a constant mean-squared error for the unique-threshold estimator; while ii) the optimal affine estimator can still exploit all privacy levels, and its variance decreases as $m$ increases. Concretely, our construction \emph{saturates} each privacy level by taking $\eps_i = 2^{-(i-1)}$ and $n_i = 2^{i-1}$, ensuring that $n_i \eps_i$ and $\eps n_\eps$ remain of order $O(1)$ regardless of the threshold chosen by the learner\footnote{This construction is similar to a well-known distribution in mechanism design called the equal revenue distribution~\cite{hartline2009simple}, which ensures that, given a cdf $F$, $x (1-F(x)) = 1$ for all $x$. In mechanism design, this distribution induces situations where a posted price mechanism always extracts the same revenue of $1$ independently of the choice of price, and is particularly relevant in bounding the gap between simple and optimal mechanisms in single-bidder single-item auctions.}; this forces the learner to add total noise scaling with $O(1)$. At the same time, the affine optimum assigns smaller weights to more private users and larger weights to less private ones, splitting the total privacy burden of $\sum_{i=1}^m n_i \eps_i \sim m$ unevenly across agents. In particular, setting the weight 
\[w_i = \frac{\eps_i}{\sum_{i=1}^m n_i \eps_i} \sim \frac{\eps_i}{m} \quad \text{and} \quad \eta = \frac{1}{\sum_{i=1}^m n_i \eps_i} \sim \frac{1}{m}\]
guarantees the desired level of privacy and ensures the privacy noise parameter scales with $1/m$, leading to a variance due to noise scaling with $1/m^2$. 
\end{proof}

\paragraph{An upper bound on the approximation factor}
Finally, we conclude our results by showing that this bound is effectively tight. Namely, we show that for any instance with $m$ distinct privacy levels, the approximation factor of the unique-threshold estimator is at most $\min\!\bigl\{(1+\log_2 n)^2,\; m^2\bigr\}$.

\begin{theorem}
\label{thm:upper-bound-general-m-levels}
For any $m \geq 2$, for any $0 < \eps_1 < \dots < \eps_m$ and $n_1,\dots,n_m \in \mathbb{N}$ such that $n = \sum_{i=1}^m n_i$, we have that: 
\[
\frac{\mse_{\thr}(\veps)}{\mse_{\mathrm{aff}}(\veps)}
\;\le\;
\min\!\bigl\{(1+\log_2 n)^2,\; m^2\bigr\}.
\]
\end{theorem}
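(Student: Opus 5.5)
Fix an optimal (or near-optimal) threshold $\tau$ in Claim~\ref{clm:threshold_characterization}, and write $c_i=\min\{\eps^{(i)},\tau\}$, $s=s_\tau=\sum_i c_i$, $q=q_\tau$. Then $\mse_{\mathrm{aff}}=\frac{q}{4s^2}+\frac{2}{s^2}$. The plan is to show that some single threshold $\eps$ makes each of the two terms of \eqref{eq:msethr} at most a factor $K^2$ times the corresponding affine term, with $K=m$ or $K=1+\log_2 n$. The core inequality we aim for is the existence of a level $\eps$ with
\[
\eps\, n_\eps \;\ge\; s/K .
\]
Given this, the noise term satisfies $\frac{2}{\eps^2 n_\eps^2}\le \frac{2K^2}{s^2}$. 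For the sampling term we need $\frac{1}{4n_\eps}\le K^2\frac{q}{4s^2}$, i.e.\ $s^2\le K^2 q\, n_\eps$. We will handle this either by choosing $\eps\le\tau$, or by a separate comparison when the sampling term dominates.

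\textbf{The $m^2$ bound.} Decompose $s=\sum_{j=1}^m n_j\min\{\eps_j,\tau\}$ into at most $m$ level contributions. By pigeonhole, some level $j$ has $n_j\min\{\eps_j,\tau\}\ge s/m$. Choosing $\eps=\min\{\eps_j,\tau\}$, which is a valid threshold since all points with $\eps^{(i)}\ge \eps_j$ are kept, gives $n_\eps\ge n_j$ and hence $\eps n_\eps\ge s/m$.

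For the sampling term, $\min\{\eps_j,\tau\}\le\tau$ and $s\le m\,n_j\tau$ only control $s$ in terms of $\tau$; we also need $q$. By Cauchy--Schwarz restricted to the points counted in $n_\eps$, combined with $c_i\le \eps$ for those points, we get $q\ge \sum_{i\,:\,\eps^{(i)}\ge\eps} c_i^2\ge (s/m)^2/n_\eps$. That last step holds because those points carry mass $\ge n_j\eps\ge s/m$. This yields $\frac{1}{4n_\eps}\le \frac{m^2 q}{4s^2}$, and adding the two terms gives ratio $\le m^2$.

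\textbf{The $(1+\log_2 n)^2$ bound.} We reuse the argument with dyadic bucketing instead of levels. Sort the points so that $c_{(1)}\ge c_{(2)}\ge\cdots\ge c_{(n)}$. Choosing the threshold at the $k$-th largest value gives $\eps n_\eps\ge k\,c_{(k)}$, so we want $\max_k k\,c_{(k)}\ge s/(1+\log_2 n)$. Group the indices into blocks $[2^t,2^{t+1})$ for $t=0,\dots,\lfloor\log_2 n\rfloor$. Within block $t$, the mass is at most $2^t c_{(2^t)}$, and $2^t c_{(2^t)}\le \max_k k c_{(k)}$. There are at most $1+\log_2 n$ blocks, so summing over them yields the inequality.

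The sampling term is treated as before: we take $\eps=c_{(k^*)}$ at the maximizing index, and the $k^*$ largest points carry mass at least $k^* c_{(k^*)}\ge s/K$. Cauchy--Schwarz then gives $q\ge (s/K)^2/k^*$ with $n_\eps\ge k^*$. Finally, the ratio is bounded by the minimum of the two bounds.

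\textbf{Main obstacle.} The delicate step is the sampling-variance comparison. Cauchy--Schwarz lower-bounds $q$ by the mass of the retained block squared divided by $n_\eps$, but $n_\eps$ may exceed $k^*$ because ties and points with larger $\eps^{(i)}$ are also retained. Retaining more points only lowers $\frac{1}{4n_\eps}$, so the bound $\frac{1}{4n_\eps}\le \frac{1}{4k^*}$ suffices. I expect this to go through, but I would check it carefully, together with the handling of non-integer thresholds.
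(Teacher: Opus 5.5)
Your proposal is correct and follows essentially the paper's route: for a fixed $\tau$ you exhibit a single threshold with $\eps n_\eps \ge s_\tau/K$, then compare the noise and sampling terms separately, and your Cauchy--Schwarz step is equivalent to the paper's bound $q_\tau \ge n_{\eps^*}(\eps^*)^2$ (the retained points satisfy $c_i \ge \eps$, not $c_i \le \eps$ as you wrote, though that aside is never actually used). The only real difference is that you get the logarithmic factor by dyadic bucketing, which yields $1+\lfloor \log_2 n\rfloor$ directly, whereas the paper uses a single threshold for both bounds, namely the maximizer of $(n-j+1)\bar\eps^{(j)}$, and obtains $H_n \le 1+\ln n$ from a harmonic sum.
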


\begin{proof}[Proof Sketch]
The proof of the Theorem can be found in Appendix~\ref{app:general-case}. The proof relies on the explicit MSE characterization for the affine estimator given in \ref{eq:mseaff}. Fix any $\tau>0$, and let $\bar\eps_j=\min\{\eps_j,\tau\}$. Then the affine estimator corresponding to $\tau$ has noise controlled by the total clipped quantity
\[
s_\tau=\sum_{j=1}^m n_j \bar\eps_j.
\]
The unique-threshold estimator is controlled by a directly comparable quantity, namely $\eps^* n_{\eps^*}$. For every fixed clipping constant $\tau$, one can always choose a threshold $\eps^*$ among the clipped values such that
\[
\eps^* n_{\eps^*} \geq \frac{s_\tau}{m},
\]
by the pigeonhole principle. In other words, although the affine estimator spreads its contribution across several clipped privacy levels, at least one threshold always captures at least a $1/m$ fraction of the total $s_\tau$. Since the threshold estimator adds noise of scale $1/(\eps^* n_{\eps^*})$, while the affine estimator corresponding to $\tau$ adds noise of scale $1/s_\tau$, this implies that the threshold estimator loses at most a factor $m$ in noise scale, and therefore at most a factor $m^2$ in the resulting variance. The worst-case of $(1 + \log_2 n)^2$ comes from the harmonic series $H_n \sim 1 + \log_2 n$, and a standard argument that $s_\tau \leq H_n \max_\eps \eps n_\eps$. 
\end{proof}

Note that the bounds of Theorems~\ref{thm:lower-bound-general-m-levels} and~\ref{thm:upper-bound-general-m-levels} are tight up to constant factors. More precisely, whenever $2^m-1 \le n$, both bounds are of the order of $\Omega(m^2)$.  Once $m$ becomes larger than $\log_2 n$, the approximation factor saturates at order $\log_2^2 n$ and increasing $m$ cannot worsen the ratio as per Theorem~\ref{thm:upper-bound-general-m-levels}. i.e., the approximation factor grows quadratically with the number of privacy levels so long as $2^m \leq n$.

\section{Discussion}\label{sec:discussion}

Our work focuses on the specific problem of mean estimation, which limits the generality of our results. Indeed, in the context of mean estimation, affine estimators are optimal for $m = 2$; however, it is unclear whether they remain optimal for $m > 2$. A natural direction is therefore to characterize the optimal estimator in this regime and quantify its gap with the simple threshold-based approach. Second, while mean estimation is a canonical task and a standard proxy in the literature, extending these insights to broader learning and optimization problems remains an important challenge. A key limitation is that there is currently no theoretical characterization of near-optimal or optimal heterogeneous differentially private estimators, even for simple problems such as linear regression. Mean estimation is often used as a proxy for harder tasks. For instance, the widely used DP-Follow-the-Regularized-Leader SGD minimizes the mean-squared error over the sum of gradients, even though minimizing this quantity does not translate directly into optimization guarantees \cite{Pillutla2025CorrelatedNM}. We thus believe that, given the current state of the theory, mean estimation is the best available proxy for testing the advantage provided by full personalization.

Another open direction is whether our takeaways are specific to the Laplace mechanism and pure differential privacy. Our results primarily rely on how privacy constraints translate into the variance of an additive noise term, suggesting that similar qualitative behavior, where the gap is controlled by the number of privacy levels $m$, may extend to other additive mechanisms and alternative privacy definitions, such as z-CDP or approximate DP. However, extending our analysis to these settings is not immediate. In particular, privacy notions with more favorable composition properties may allow heterogeneous privacy constraints to aggregate differently, potentially reducing the overall noise required, for example by yielding sublinear dependence on $m$.

\section{Conclusion}\label{sec:conclusion}

Our results show that, in mean estimation under heterogeneous differential privacy constraints, relatively little is gained from full personalization. A simple thresholding estimator --- discarding overly private data and applying a single carefully chosen privacy level --- already captures most of the achievable utility compared to the optimal affine estimator. In particular, it achieves constant-factor approximations in natural regimes (e.g., a factor of 2 with public data and 4 with two privacy levels), and we precisely characterize when and how larger gaps can arise. When $n$ agents and $m > 2$ privacy levels are present, the picture becomes more nuanced, with an approximation factor of $O(\log^2 n)$ in the worst case; however, the gains remain modest.  Since optimal personalized estimators are not known in more complex settings, the gains from personalization for broader learning tasks remain open. We conjecture that similar takeaways could be established beyond mean estimation.

\section{Acknowledgments}

Prof. Ziani was supported by NSF CAREER IIS-2336236 and NSF Medium IIS-2504990. Edwige Cyffers was supported by the National Research Agency under France 2030, reference “ANR-23-IACL-0008”. This work was done while the authors were visiting the Simons Institute for the Theory of Computing.

\bibliographystyle{plain} 
\bibliography{biblio.bib}

\appendix

\input{appendix.tex}

\newpage

\end{document}

%% file: appendix.tex
\section{Preliminaries: Proof of Claim~\ref{clm:threshold_characterization}}\label{app:claim-1}

Since the weights sum to $1$, we must have
\[
1=\sum_{j=1}^n w_j=\sum_{j=1}^n \eta \min\{\eps^{(j)},\tau\},
\]
hence
\[
\eta=1/s_{\tau}.
\]
Therefore
\[
w_j=\frac{\min\{\eps^{(j)},\tau\}}{s_{\tau}},
\]
and the resulting worst-case mean-squared error is
\[
\frac14\sum_{j=1}^n w_j^2+ 2 \eta^2
=
\frac{q_{\tau}}{4s_{\tau}^2}+\frac{2}{s_{\tau}^2}.
\]
Taking the infimum over $\tau>0$ yields the claim.

In all the proofs, we use the notation $\mse_i$ to refer to the mean squared error obtained when using the threshold estimator of parameter $\eps_i$:
\[
\mse_{i}:=
\frac{1}{4n_{\eps_i}}+\frac{2}{(\eps_i n_{ \eps_i})^2}.
\]

\section{Proof of Theorem \ref{thm:public-data-factor-2}}\label{app:public}

We now provide the proof of the upper bound given in Theorem~\ref{thm:public-data-factor-2}. Let
\[
\mse_{\mathrm{priv}}=\frac{1}{4n_1}+\frac{2}{(n_1\varepsilon_1)^2}
\]
denote the risk of the optimal estimator that uses only the $n_1$ private samples, and let
\[
\mse_{\mathrm{pub}}=\frac{1}{4n_2}
\]
denote the risk of the empirical mean on the public samples. By definition,
\[
\mse_2=\mse_{\mathrm{pub}}.
\]

First, observe that using all $n_1+n_2$ samples at privacy level $\varepsilon_1$ can only improve over using the $n_1$ private samples alone, since both the sampling variance and the privacy noise decrease. Hence
\[
\mse_1 \le \mse_{\mathrm{priv}}.
\]
It follows that
\[
\mse_{\thr}
=
\min\{\mse_1,\mse_2\}
\le
\min\{\mse_{\mathrm{priv}},\mse_{\mathrm{pub}}\}.
\]

Next, in the public-data regime, the optimal affine estimator is the optimal linear combination of the private estimator and the public estimator as noted in~\cite{Syomantak}, and its risk is
\[
\mse_{\mathrm{aff}}
=
\frac{\mse_{\mathrm{priv}}\cdot \mse_{\mathrm{pub}}}
{\mse_{\mathrm{priv}}+\mse_{\mathrm{pub}}}.
\]
Indeed, if we combine two unbiased estimators with risks $a$ and $b$ using weights $\frac{b}{a+b}$ and $\frac{a}{a+b}$, the resulting risk is
\[
\frac{b^2 a + a^2 b}{(a+b)^2}
=
\frac{ab}{a+b}.
\]

Finally, for any $a,b>0$,
\[
\min\{a,b\} \le \frac{2ab}{a+b}.
\]
Applying this with $a=\mse_{\mathrm{priv}}$ and $b=\mse_{\mathrm{pub}}$, we obtain
\[
\min\{\mse_{\mathrm{priv}},\mse_{\mathrm{pub}}\}
\le
2\frac{\mse_{\mathrm{priv}}\cdot \mse_{\mathrm{pub}}}
{\mse_{\mathrm{priv}}+\mse_{\mathrm{pub}}}
=
2\mse_{\mathrm{aff}}.
\]
Combining the two inequalities yields
\[
\mse_{\thr} \le 2\mse_{\mathrm{aff}},
\]
and therefore
\[
\frac{\mse_{\thr}}{\mse_{\mathrm{aff}}}\le 2.
\]
This concludes the proof.

\section{Detailed Computation of Example}
Consider $\veps = (\frac{1}{2}, 1)$, The unique-threshold estimator has mean-squared error given by 
 \[
 \min \left(\frac{1}{4\cdot 2}+\frac{2}{(2\eps_1)^2}, \frac{1}{4}+\frac{2}{\eps_2^2} \right) = 17/8. 
 \]

On the other hand, for any $\tau>0$, $\mse_{\mathrm{aff}}(\veps)
 \le
 \frac{q_\tau}{4s_\tau^2}+\frac{2}{s_\tau^2}$, where
 $s_\tau=\sum_{i=1}^2 \min\{\eps_i,\tau\}$, and $q_\tau=\sum_{i=1}^2 \min\{\eps_i,\tau\}^2$. It is easy to check that $\tau=1$ is optimal, and we obtain
 \[
 s_1=\eps_1+\eps_2=\frac32,
 \qquad
 q_1=\eps_1^2+\eps_2^2=\frac14+1=\frac54.
 \]
 Hence
 \[
 \mse_{\mathrm{aff}}(\veps)
 =
 \frac{\frac54}{4(\frac32)^2}+\frac{2}{(\frac32)^2}
 =
 \frac{5}{36}+\frac{8}{9}
 =
 \frac{37}{36}.
 \]
 Combining the two bounds gives
 \[
 \frac{\mse_{\mathrm{th}}(\veps)}{\mse_{\mathrm{aff}}(\veps)}
 \ge
 \frac{\frac{17}{8}}{\frac{37}{36}}
 =
 \frac{153}{74}
 >2.
 \]

\section{Proof of Theorem~\ref{thm:two-level-factor-4}}\label{app:two-level}

Let $n = n_1 + n_2$. Because there are only two privacy levels, the Unique-threshold estimator has only two possible choices: either it uses all $n$ data points at privacy level $\eps_1$, or it uses only the $n_2$ data points in the second group at privacy level $\eps_2$. Hence its risk is
\[
\mse_{\mathrm{th}}(\veps)=\min\{\mse_1,\mse_2\},
\]
where
\[
\mse_1=\frac{1}{4n}+\frac{2}{(n\eps_1)^2},
\qquad
\mse_2=\frac{1}{4n_2}+\frac{2}{(n_2\eps_2)^2}.
\]

Next, define
\[
R:=1+\frac{8}{n_1\eps_1^2}.
\]
For two privacy levels, according to~\cite{Syomantak}, the optimal affine risk is given by
\[
\mse_{\mathrm{aff}}(\veps)
=
\begin{cases}
\displaystyle
\frac{n_1\eps_1^2+n_2\eps_2^2+8}{4(n_1\eps_1+n_2\eps_2)^2},
& \text{if } \eps_2\le R\eps_1,\\[1.2em]
\displaystyle
\frac{R}{4(n_1+n_2R)},
& \text{if } \eps_2\ge R\eps_1.
\end{cases}
\]
We show that in both regimes,
\[
\min\{\mse_1,\mse_2\}\le 4\,\mse_{\mathrm{aff}}(\veps).
\]

\paragraph{Case 1: $\eps_2\le R\eps_1$.}
In this case,
\[
\mse_{\mathrm{aff}}(\veps)
=
\frac{n_1\eps_1^2+n_2\eps_2^2+8}{4(n_1\eps_1+n_2\eps_2)^2}.
\]

We now compare this quantity to $\mse_1$ and $\mse_2$.

If
\[
n\eps_1 \ge n_2\eps_2,
\]
then
\[
n_1\eps_1+n_2\eps_2 \le 2n\eps_1.
\]
Moreover, since $\eps_2\ge \eps_1$,
\[
n_1\eps_1^2+n_2\eps_2^2+8 \ge n\eps_1^2+8.
\]
Therefore,
\[
\mse_{\mathrm{aff}}(\veps)
\ge
\frac{n\eps_1^2+8}{16n^2\eps_1^2}
=
\frac{1}{16n}+\frac{1}{2n^2\eps_1^2}
=
\frac{\mse_1}{4}.
\]
Hence
\[
\mse_{\mathrm{th}}(\veps)\le \mse_1 \le 4\,\mse_{\mathrm{aff}}(\veps).
\]

Otherwise,
\[
n_2\eps_2 \ge n\eps_1.
\]
Then
\[
n_1\eps_1+n_2\eps_2 \le 2n_2\eps_2,
\]
and
\[
n_1\eps_1^2+n_2\eps_2^2+8 \ge n_2\eps_2^2+8.
\]
Thus
\[
\mse_{\mathrm{aff}}(\veps)
\ge
\frac{n_2\eps_2^2+8}{16n_2^2\eps_2^2}
=
\frac{1}{16n_2}+\frac{1}{2n_2^2\eps_2^2}
=
\frac{\mse_2}{4}.
\]
Hence
\[
\mse_{\mathrm{th}}(\veps)\le \mse_2 \le 4\,\mse_{\mathrm{aff}}(\veps).
\]

So in all subcases of Case 1,
\[
\mse_{\mathrm{th}}(\veps)\le 4\,\mse_{\mathrm{aff}}(\veps).
\]

\paragraph{Case 2: $\eps_2\ge R\eps_1$.}
Set
\[
\eps_2':=R\eps_1.
\]
Since $\mse_2$ is decreasing in $\eps_2$, increasing $\eps_2$ can only help the threshold estimator. Therefore
\[
\mse_{\mathrm{th}}(\eps_1,\eps_2)\le \mse_{\mathrm{th}}(\eps_1,\eps_2').
\]

On the other hand, once $\eps_2\ge R\eps_1$, the affine optimum saturates: replacing $\eps_2$ by $\eps_2'=R\eps_1$ does not change $\mse_{\mathrm{aff}}$. Thus
\[
\mse_{\mathrm{aff}}(\eps_1,\eps_2)=\mse_{\mathrm{aff}}(\eps_1,\eps_2').
\]
Since $(\eps_1,\eps_2')$ lies on the boundary of Case 1, we may apply the previous argument to obtain
\[
\mse_{\mathrm{th}}(\eps_1,\eps_2')\le 4\,\mse_{\mathrm{aff}}(\eps_1,\eps_2').
\]
Combining the last three displays gives
\[
\mse_{\mathrm{th}}(\eps_1,\eps_2)\le 4\,\mse_{\mathrm{aff}}(\eps_1,\eps_2).
\]

Therefore, in all cases,
\[
\frac{\mse_{\mathrm{th}}(\veps)}{\mse_{\mathrm{aff}}(\veps)}\le 4.
\]
This concludes the proof.

\section{Proofs for the General Case of Section \ref{sec:general_case}}\label{app:general-case}

\subsection{Proof of Theorem~\ref{thm:lower-bound-general-m-levels}}

Assume that $2^m-1 \le n$. Let
\[
\eps_i = 2^{-(i-1)}
\qquad\text{and}\qquad
n_i = 2^{i-1}
\qquad\text{for all } i \in [m].
\]
Note that the total number of data points is given by
\[
\sum_{i=1}^m n_i = 2^m-1
\]
datapoints.

We first upper bound $\mse_{\mathrm{aff}}(\veps)$. Let
\[
S := \sum_{i=1}^m n_i \eps_i.
\]
For this construction,
\[
S = \sum_{i=1}^m 2^{i-1}2^{-(i-1)} = m.
\]
Consider the affine estimator assigning to each user in group $i$ the weight
\[
w_i=\frac{\eps_i}{m}.
\]
Then the weights sum to $1$, and
\[
\max_{i\in[m]}\frac{w_i}{\eps_i}=\frac1m.
\]
Moreover,
\[
\sum_{i=1}^m n_i w_i^2
=
\frac1{m^2}\sum_{i=1}^m 2^{i-1}2^{-2(i-1)}
<
\frac2{m^2}.
\]
Therefore,
\[
\mse_{\mathrm{aff}}(\veps)
\le
\frac14\sum_{i=1}^m n_i w_i^2
+
2\left(\frac1m\right)^2
\le
\frac{5}{2m^2}.
\]

We now lower bound $\mse_{\thr}(\eps)$. It suffices to consider thresholds
\[
\eps_{g+1}=2^{-g},
\qquad g\in\{0,\dots,m-1\}
\]
without loss of generality. For any threshold $\eps>0$, let
\[
n_\eps:=\bigl|\{j\in[n]: \eps_j\ge \eps\}\bigr|
\]
denote the number of selected datapoints. We have that
\[
n_\eps=2^{g+1}-1.
\]
Thus
\[
\eps n_\eps=(2^{g+1}-1)2^{-g}<2.
\]

Now take $\mathcal{P}=\delta_0$, so that $X_i=0$ almost surely and $\mup=0$. Then the threshold estimator reduces to
\[
\hat\mu_\eps(X)=Z_\eps,
\qquad
Z_\eps\sim \Lap\!\left(\frac1{n_\eps \eps}\right),
\]
and so
\[
\mathbb{E}\big[(\hat\mu_\eps(X)-\mup)^2\big]
=
\frac{2}{(n_\eps \eps)^2}
>
\frac12.
\]
Therefore,
\[
\mse_{\thr}(\eps)\ge \frac12.
\]

Combining the two bounds yields
\[
\frac{\mse_{\thr}(\veps)}{\mse_{\mathrm{aff}}(\veps)}
\ge
\frac{1/2}{5/(2m^2)}
=
\frac{m^2}{5}.
\]

\subsection{Proof of Theorem~\ref{thm:upper-bound-general-m-levels}}

We now prove a comparison lemma showing that for any candidate affine solution with clipping threshold $\tau$, one can choose a threshold whose error is worse by at most a factor $\min\{m^2,H_n^2\} \leq \min\{m^2,1 + \log n^2\} $, where
\[
H_n:=\sum_{k=1}^n \frac1k
\]
is the $n$-th harmonic number. Since the result holds for all $\tau$, it in particular holds for the optimal $\tau$, concluding the proof.

\begin{lemma}
\label{lem:threshold-vs-affine-sectionX}
For every $\tau>0$, there exists a threshold $\eps^*>0$ such that
\[
\mse_{\thr}
\le
\frac{1}{4n_{\eps^*}}+\frac{2}{(\eps^*)^2n_{\eps^*}^2}
\le
\min\{m^2,H_n^2\}
\left(
\frac{q_{\tau}}{4s_{\tau}^2}
+
\frac{2}{s_{\tau}^2}
\right),
\]
where
\[
n_{\eps^*}:=\bigl|\{j\in[n]:\eps{(j)}\ge \eps^*\}\bigr|.
\]
\end{lemma}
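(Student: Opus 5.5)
The plan is to fix $\tau>0$ and reduce the comparison to the single quantity $\eps n_\eps$, splitting into the sampling-variance and noise terms separately. Write $\bar\eps^{(j)}=\min\{\eps^{(j)},\tau\}$. Let $t_1<\dots<t_k$ be the distinct values among the $\bar\eps^{(j)}$; there are $k\le m$ of them, and $t_k\le\tau$. The first inequality in the lemma is immediate from the definition \eqref{eq:msethr}, since $\mse_{\thr}$ is an infimum over thresholds. The work is entirely in the second inequality.

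\textbf{Step 1 (layer-cake representation).} For a threshold $t\le\tau$, $\bar\eps^{(j)}\ge t$ iff $\eps^{(j)}\ge t$. So the count $N(t):=|\{j:\bar\eps^{(j)}\ge t\}|$ coincides with $n_t$. With $t_0=0$, we write
\[
s_\tau=\sum_{l=1}^k (t_l-t_{l-1})\,N(t_l).
\]

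\textbf{Step 2 (two choices of $\eps^*$).} Take $\eps^*=t_{l^*}$ maximizing $t_l N(t_l)$.

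For the $m$ bound, each summand satisfies $(t_l-t_{l-1})N(t_l)\le t_lN(t_l)$. Hence $s_\tau\le k\,\eps^*n_{\eps^*}\le m\,\eps^*n_{\eps^*}$, which is the pigeonhole step.

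For the $H_n$ bound, sort the $\bar\eps^{(j)}$ decreasingly as $b_1\ge\dots\ge b_n$. Then $b_r\cdot r\le b_r\,N(b_r)\le M:=\max_t tN(t)$, so $b_r\le M/r$ and $s_\tau=\sum_r b_r\le H_nM$. Since every $b_r$ is some $t_l\le\tau$, the maximum $M$ is attained at $\eps^*=t_{l^*}$.

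Altogether, $\eps^*n_{\eps^*}\ge s_\tau/\min\{m,H_n\}$, and therefore
\[
\frac{2}{(\eps^*n_{\eps^*})^2}\le \min\{m^2,H_n^2\}\,\frac{2}{s_\tau^2}.
\]

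\textbf{Step 3 (sampling term).} We need $\frac{1}{4n_{\eps^*}}\le \min\{m^2,H_n^2\}\frac{q_\tau}{4s_\tau^2}$, i.e.\ $s_\tau^2\le c^2 n_{\eps^*}q_\tau$ with $c=\min\{m,H_n\}$. This is the main obstacle, because $\eps^*$ was chosen to favour the noise term, not $n_{\eps^*}$.

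The available tools are these. By Cauchy--Schwarz over the $n_{\eps^*}$ indices with $\bar\eps^{(j)}\ge\eps^*$ (each contributing at least $(\eps^*)^2$), we get $q_\tau\ge n_{\eps^*}(\eps^*)^2$. Then
\[
c^2n_{\eps^*}q_\tau\ge c^2(n_{\eps^*}\eps^*)^2\ge s_\tau^2,
\]
using Step 2. So the sampling term is also dominated, with the same factor.

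Summing the two bounds gives the lemma. Taking $\tau$ optimal and invoking Claim~\ref{clm:threshold_characterization} then yields Theorem~\ref{thm:upper-bound-general-m-levels}, using $H_n\le 1+\log_2 n$. If the Cauchy--Schwarz step turned out to be too lossy in some configuration, the fallback would be to choose $\eps^*$ minimizing the threshold MSE directly and compare termwise. I expect the argument above to suffice, however, since both terms reduce to bounding $s_\tau$ by $c\,\eps^*n_{\eps^*}$.
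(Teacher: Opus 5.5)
Your proof is correct and follows the paper's argument. You choose $\eps^*$ among the clipped values to maximize $\eps\, n_\eps$, bound $s_\tau\le\min\{m,H_n\}\,\eps^* n_{\eps^*}$, and control the sampling term via $q_\tau\ge n_{\eps^*}(\eps^*)^2$. Your layer-cake slicing is a cleaner justification of the paper's ``at most $m$ blocks'' step. One quibble: that last inequality comes from dropping terms and lower-bounding each remaining one, not from Cauchy--Schwarz, and the fallback you mention is unnecessary.
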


\begin{proof}
Fix $\tau>0$, and let
\[
\bar\eps_j:=\min\{\eps^{(j)},\tau\}
\qquad\text{for all }j\in[n].
\]
Reindex the coordinates so that
\[
\bar\eps^{(1)}\le \bar\eps^{(2)}\le \cdots \le \bar\eps^{(n)}.
\]
Since the original privacy vector takes at most $m$ distinct values, the clipped vector $(\bar\eps^{(j)})_{j=1}^n$ also takes at most $m$ distinct values.

For each $j\in[n]$, define
\[
u_j:=(n-j+1)\bar\eps^{(j)}.
\]
Let
\[
j^*\in\arg\max_{j\in[n]}u_j,
\qquad
\eps^*:=\bar\eps^{(j^*)}.
\]
Then
\[
u_{j^*}=(n-j^*+1)\eps^*.
\]

Moreover, all indices $j\ge j^*$ satisfy $\bar\eps^{(j)}\ge \eps^*$, hence also $\eps^{(j)}\ge \eps^*$. Therefore
\[
n_{\eps^*}\ge n-j^*+1,
\]
and so
\[
u_{j^*}\le \eps^* n_{\eps^*}.
\]

Next, by maximality of $j^*$, for every $j\in[n]$,
\[
u_j\le u_{j^*},
\qquad\text{that is,}\qquad
\bar\eps^{(j)}\le \frac{u_{j^*}}{n-j+1}.
\]
Summing over $j$ yields
\[
s_\tau=\sum_{j=1}^n \bar\eps^{(j)}
\le
u_{j^*}\sum_{j=1}^n \frac{1}{n-j+1}
=
u_{j^*}\sum_{k=1}^n \frac1k
=
H_n\,u_{j^*}.
\]
Thus
\[
s_\tau\le H_n\,u_{j^*}\le H_n\,\eps^* n_{\eps^*}.
\]

We also claim that
\[
s_\tau\le m\,u_{j^*}.
\]
Indeed, since $(\bar\eps_j)$ takes at most $m$ distinct values, the sum $s_\tau$ can be decomposed into at most $m$ blocks, each contributing at most $u_{j^*}$ by maximality. It must then be the case that $s_\tau \le m\,u_{j^*}$.

Combining the two estimates, we obtain
\[
s_\tau\le \min\{m,H_n\}\,u_{j^*}
\le
\min\{m,H_n\}\,\eps^* n_{\eps^*}.
\]
It follows that
\[
\frac{2}{(\eps^*)^2n_{\eps^*}^2}
\le
\min\{m^2,H_n^2\}\frac{2}{s_\tau^2}.
\]

For the sampling term, every selected point satisfies $\bar\eps^{(j)}\ge \eps^*$, so
\[
q_\tau=\sum_{j=1}^n (\bar\eps^{(j)})^2
\ge
n_{\eps^*}(\eps^*)^2.
\]
Combining this with
\[
s_\tau\le \min\{m,H_n\}\,\eps^* n_{\eps^*}
\]
gives
\[
\frac{q_\tau}{4s_\tau^2}
\ge
\frac{n_{\eps^*}(\eps^*)^2}{4s_\tau^2}
\ge
\frac{1}{4\min\{m^2,H_n^2\}n_{\eps^*}},
\]
that is,
\[
\frac{1}{4n_{\eps^*}}
\le
\min\{m^2,H_n^2\}\frac{q_\tau}{4s_\tau^2}.
\]

Summing the two bounds yields
\[
\frac{1}{4n_{\eps^*}}+\frac{2}{(\eps^*)^2n_{\eps^*}^2}
\le
\min\{m^2,H_n^2\}
\left(
\frac{q_{\tau}}{4s_{\tau}^2}
+
\frac{2}{s_{\tau}^2}
\right).
\]
Since $\mse_{\thr}$ is the infimum over all thresholds, it is at most the left-hand side for this choice of $\eps^*$, which proves the claim.
\end{proof}